 \definecolor{BLACK}{gray}{0}
 \definecolor{WHITE}{gray}{1}
 \definecolor{RED}{rgb}{1,0,0}
 \definecolor{GREEN}{rgb}{0,1,0}
 \definecolor{BLUE}{rgb}{0,0,1}
 \definecolor{CYAN}{cmyk}{1,0,0,0}
 \definecolor{MAGENTA}{cmyk}{0,1,0,0}
 \definecolor{YELLOW}{cmyk}{0,0,1,0}
\def\identity{\leavevmode\hbox{\small1\kern-3.8pt\normalsize1}}
\newtheorem{theorem}{\bf{Theorem}}
\newtheorem{lemma}{\bf{Lemma}}
\newcommand{\ket}[1]{\left | #1 \right\rangle}
\newcommand{\bra}[1]{\left \langle #1 \right |}
\newcommand{\Tr}{\mathrm{Tr}}
\renewcommand{\epsilon}{\varepsilon}
\begin{document}

\title{Excessive distribution of quantum entanglement}

\author{Margherita Zuppardo}

\author{Tanjung Krisnanda}

\author{Tomasz Paterek}
\email{tomasz.paterek@ntu.edu.sg}

\affiliation{School of Physical and Mathematical Sciences, Nanyang Technological
University, Singapore}

\affiliation{Centre for Quantum Technologies, National University of Singapore,
Singapore}

\author{Somshubhro Bandyopadhyay}
\email{som.s.bandyopadhyay@gmail.com}

\author{Anindita Banerjee }

\author{Prasenjit Deb }

\author{Saronath Halder}

\affiliation{Department of Physics and Center for Astroparticle Physics and Space
Science, Bose Institute, Block EN, Sector V, Bidhan Nagar, Kolkata
700091, India}

\author{Kavan Modi}

\affiliation{School of Physics, Monash University, Victoria 3800, Australia }

\author{Mauro Paternostro }

\affiliation{School of Mathematics and Physics, Queen's University, Belfast BT7
1NN, United Kingdom}

\pacs{03.65.Ud}
\begin{abstract}
We classify protocols of entanglement distribution as excessive
and non-excessive ones. In a non-excessive protocol, the gain of entanglement
is bounded by the amount of entanglement being communicated between the remote parties, while excessive protocols
violate such bound. We first present examples of excessive protocols
that achieve a significant entanglement gain. Next we consider their
use in noisy scenarios, showing that they improve entanglement achieved
in other ways and for some situations excessive distribution is the
only possibility of gaining entanglement. 
\end{abstract}
\maketitle

\section{Introduction}

Quantum entanglement is not only an essential concept of quantum mechanics
but also ``a new resource as real as energy''~\cite{RevModPhys.81.865}.
Distributing entanglement between two distant laboratories is crucial
for quantum information processing as exemplified by cryptography~\cite{cryptography},
dense coding~\cite{dense-coding} or teleportation~\cite{teleportation}.
Nonetheless limits on entanglement distribution are only recently
studied and not fully understood. 

Remarkably, the pre-availability of entanglement is not necessary to create an 
entangled network of local nodes~\cite{cubitt}. 
This finding has attracted considerable attention, resulting in  
several theoretical proposals ~\cite{mista1, mista2, mista3, sep2} and inspiring test-bed
experimental realizations~\cite{exp0,exp1,exp2,exp3}.  On the other 
hand, it generated great curiosity on what really limits the distribution,
if not the carried entanglement. In  ~\cite{bounds1, bounds2} it has 
been shown that quantum discord is a necessary condition for a successful
distribution, providing an upper bound to the amount of entanglement 
generated. However, the presence of discord in the carrier is not a 
sufficient condition, and e.g. Ref.~\cite{Streltsov1} investigates 
other limitations of the resources, highlighting a link to the 
dimensionality and the rank of the state of the carrier system for 
distribution with separable states. A recent work ~\cite{Streltsov2} 
further investigates the role of carried entanglement and other quantum 
correlations in the presence of noise. In this case, it is shown that 
the optimal strategy may depend on the entanglement measure at hand.

Our work aims to contribute along similar directions by providing a systematic 
characterisation of entanglement-distribution protocols and illustrating explicit examples
where such schemes could be useful.

\begin{figure*}
\includegraphics[width=0.75\textwidth]{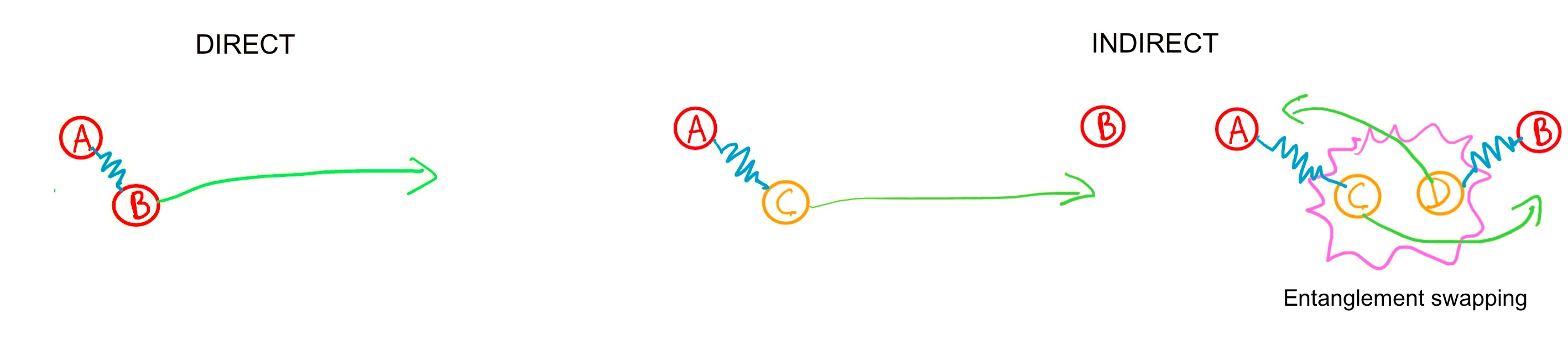}
\caption{(Color online) Direct and indirect protocols for entanglement distribution.
In the direct protocol, the systems of interest get entangled via mutual
interaction. In the indirect protocol they get entangled via interactions
with ancillary systems. As examples, we draw a uni-directional protocol
with ancilla travelling from one lab to the other and an entanglement
swapping scheme with Bell measurement conducted on the ancillae.}
\label{FIG_protocols} 
\end{figure*}

We present two classes of distribution protocols: in \emph{direct} distribution schemes, 
entanglement is first established {\it locally} in one laboratory by direct interaction of two 
subsystems, and then distributed by sending one of them to a remote node of a network. 
In \emph{indirect} schemes, one starts with the subsystems already apart and
uses ancillary systems as communication channels between the laboratories to establish entanglement 
between them. This class encompasses, among others, the intriguing protocols that rely only on separable ancillary carriers mentioned above~\cite{cubitt,mista1, mista2, mista3, sep2,exp0,exp1,exp2,exp3}. 
They reveal that, even if no entanglement is being communicated, i.e.
ancillary systems are at all times separable from the core
subsystems, entanglement in the state of the latter can grow.
The existence of such protocols motivates the study on the second class of protocols.
 We call a protocol \emph{excessive} if entanglement gain exceeds the amount of communicated
entanglement, and \emph{non-excessive} otherwise. Only indirect protocols
can be excessive as in direct distribution schemes the carrier is one
of the core subsystems.

In what follows we provide examples of protocols that belong to each
category, as well as illustrate the relations that exist between them. Our study addresses both
pure and mixed states, in both ideal and noisy conditions.
In Sec.~\ref{SEC_PROTOCOLS} we describe in detail the classifications
above. Sec.~\ref{noiseless} demonstrates that, for certain measures of entanglement,
excessive distribution is possible even with pure states. We also
provide cases of excessive protocols in which considerable entanglement gain
is achieved both via separable and entangled carriers. In Sec.~\ref{noise}
we investigate a possible advantage of indirect and excessive protocols
in noisy environments. It is shown that entanglement distribution
via separable carriers is not possible for entanglement breaking communication
channels. For other channels, excessive protocols are shown to be
useful, especially when there is not much control over the {\it quality} of the state of the ancilla.
Quite remarkably, they are often the only way of increasing entanglement.


\section{Entanglement distribution protocols}

\label{SEC_PROTOCOLS}

We start by describing in detail the two classes of
entanglement distribution protocols addressed in this work and give examples of typical members
of each class.

\subsection{Direct and indirect protocols}

The difference between direct and indirect protocols is illustrated in Fig.\ref{FIG_protocols}.
Direct protocols are the most straightforward ways of increasing
entanglement between distant laboratories: all they entail is the preparation of entangled
states in one laboratory and the transmission of one subsystem to a distant laboratory. 

In an indirect protocol, on the other hand, one requires the use of additional systems to entangle the main
ones, which are typically already located in distant laboratories. The
simplest example is to first entangle an ancilla with a particle present in one of the remote laboratories, and 
then transmit it to the other lab. The protocol would be completed by swapping the state of the ancilla and that of the 
local particle at such remote laboratory. A well-known  example of this
kind of protocol is entanglement swapping~\cite{swapping}, where
entanglement initially present in the state of two system-ancilla pairs is
teleported to the systems alone~\cite{exp_swap1,exp_swap2}.

Under small experimental imperfections and low-noise channels,
the implementation of indirect distribution protocols is likely too demanding to be practically
useful. However, for imperfect operations and situations where noise
cannot be ignored, it is quite natural to conceive that an indirect protocol might be more advantageous 
than direct distribution schemes. Such intuition is reinforced by the results presented in this paper. 

We will focus on simple indirect protocols with ancillae transmitted in one way, i.e. from one laboratory to the other. We will
calculate the final inter-laboratory entanglement achieved through the implementation of a given protocol, thus 
including the core system and the ancillae, instead of focusing on the entanglement between the core subsystems only. 
Although, in principle, these are two different quantities, it has been proven in Ref.~\cite{bounds2}
that entanglement can be localised into the state of the core system as long
as certain dimensionality conditions are satisfied, which hold for
most of the cases discussed in this paper.

\subsection{Excessive and non-excessive protocols}

Our second classification divides the protocols with respect to the
amount of entanglement gained as compared to the communicated entanglement.
Fig.~\ref{FIG_p1} presents quantities relevant to this classification.
We consider the change of entanglement between the laboratories of Alice
and Bob caused solely by the exchange of an ancillary carrier system between them (particle $C$ in Fig.~\ref{FIG_p1}). We define \emph{communicated
entanglement} $E_{\mathrm{com}}$ as the entanglement between $C$ and the systems at the laboratories, which are dubbed $A$ and $B$. Therefore, we take $E_{\mathrm{com}}=E_{AB:C}$.
The change of entanglement resulting from such a communication step is intended as the difference between the inter-laboratory entanglement  
when $C$ is with Bob and when $C$ is with Alice, i.e.
\begin{equation}
\Delta E\equiv E_{A:CB}-E_{AC:B}.
\end{equation}
A protocol is called excessive or non-excessive depending on how $\Delta E$
compares with $E_{\mathrm{com}}$. In particular, we have 
\begin{eqnarray*}
\Delta E & \le & E_{\mathrm{com}}: \qquad\textrm{Non-excessive protocol},\\
\Delta E & > & E_{\mathrm{com}}:\qquad\textrm{Excessive protocol}.
\end{eqnarray*}
Therefore, in an excessive protocol the entanglement gain exceeds the limit set by the communicated entanglement. As we will show, this property is dependent on the choice of entanglement monotone.
\begin{figure}[t]
\includegraphics[width=0.5\textwidth]{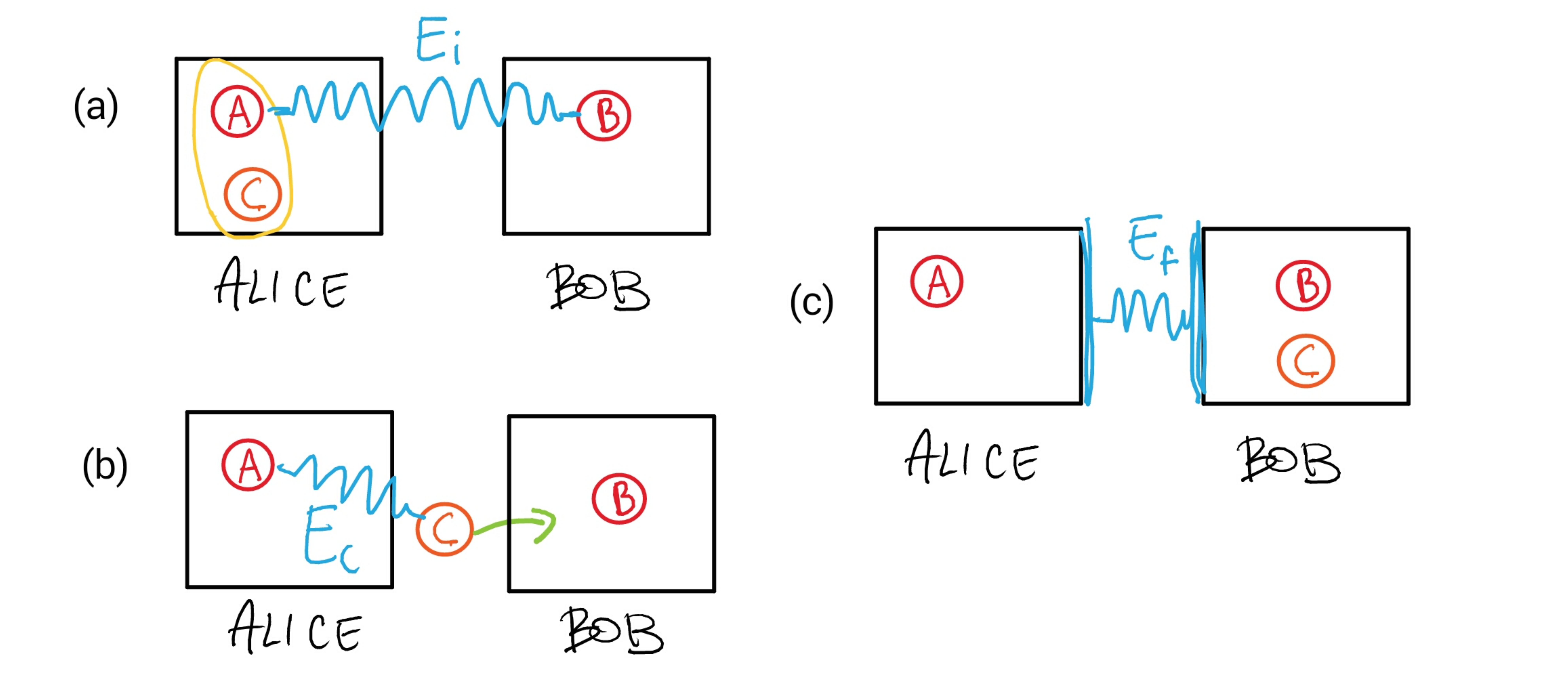} 
\caption{(Color online) Scenario of excessive and non-excessive entanglement
distribution protocols. We study entanglement gain between the laboratories
caused solely by the communication of particle $C$ from Alice to
Bob. (a) Initially entanglement between the laboratories is given by $E_{AC:B}$.
(b) The communicated entanglement is taken to be the entanglement between
the carrier and the other subsystems, i.e. $E_{AB:C}$. (c) Finally the entanglement
between the laboratories is given by $E_{A:CB}$. The protocol is excessive
if $E_{A:CB}-E_{AC:B}>E_{AB:C}$, meaning that the gain is greater
than what was communicated. Otherwise we call it non-excessive.}
\label{FIG_p1} 
\end{figure}

The examples of indirect protocols given above are all non-excessive. Other notable classes of non-excessive
indirect protocols will be presented later. The existence of excessive
protocols was first pointed out in Ref.~\cite{cubitt} and further
examples were presented in Refs.~\cite{mista1, mista2, mista3,sep2}.
However, to the best of our knowledge, an analytical example of excessive protocol with non-zero
communicated entanglement will be presented here for the first time.

In addition to the entanglement change between laboratories, it is also
interesting to investigate the change of entanglement in the principal system.
In Fig.~\ref{FIG_p1}, the latter is composed of particles
$A$ and $B$, which are stationary in the laboratories of Alice and Bob.
As initial entanglement in the principal system $E_{i}$ we naturally
choose the entanglement available before Alice makes her particle $A$ interact with the ancilla $C$.
Assuming that the ancilla is initially uncorrelated from the particles $AB$
(an assumption that holds in all our examples and that is typically verified experimentally) we find that
$E_{AC:B}\le E_{i}$. The final entanglement is the one available after $C$
reaches Bob's laboratory and he applies a general local transformations on his particles
in order to localise the entanglement established between the laboratories into the state of the principal
system, i.e., $E_{f}\le E_{A:CB}$. We therefore conclude that 
\begin{equation}
E_{f}-E_{i} \le E_{A:CB}-E_{AC:B}=E_{\mathrm{fin}}-E_{\mathrm{in}}=\Delta E.
\end{equation}
In our nomenclature $E_f$ and  $E_i$ indicate respectively the final and initial entanglement between the principal subsystems $A$ and $B$,
whereas $E_{\mathrm{fin}}$ and  $E_{\mathrm{in}}$ denote the final and initial entanglement between the laboratories, respectively.
Non-excessive protocols for entanglement between the laboratories are also
non-excessive for the principal system, but the excessive protocols for laboratories do
not guarantee that the principal system gains entanglement above the communicated one. 
This is because not all entanglement can be localised
into the principal system and some of its initial entanglement might
be destroyed by interactions with the ancilla.


\section{Ideal conditions}

\label{noiseless}

In this Section we investigate entanglement gain for the ideal case
where there is no noise in the communication channel between laboratories. We first study the indirect
protocol of Fig. \ref{FIG_p1}, where the state of the whole $ABC$ system is pure. It turns out that the excessiveness depends on the
particular entanglement measure being used. We then present a single parameter
family of five-qubit states that are shown to provide a single platform
exhibiting various possibilities of entanglement gain.

\subsection{Entanglement measures }

For sub-additive measures proportional to the entropy of subsystems, like the von
Neumann entropy and the linear entropy for pure states, we find
that the protocols are always non-excessive. For other measures, such
as negativity~\cite{negativity} and logarithmic negativity~\cite{negativity,logneg1}, we show that even pure states
of sufficiently high dimension give rise to excessive gain.

\subsubsection{Sub-additive measures}

We begin by noting that for pure states the condition for
non-excessiveness is equivalent to the sub-additivity of measures
that characterise pure state entanglement in terms of properties of
a subsystem. The non-excessiveness condition reads
\begin{equation}
E_{A:CB}\le E_{AC:B}+E_{AB:C}.
\end{equation}
If the entanglement involved in the equation above embodies a property of a subsystem,
such as $E_{i:jk}=S_{jk}$ with $S_{jk}$ being a property of subsystem $jk$ (here $i,j,k=A,B,C$),
we can rewrite this as 
\begin{equation}
S_{BC}\le S_{B}+S_{C}.
\label{subadditivity}
\end{equation}
This is exactly the sub-additivity property, which appears as a necessary and sufficient consequence of the non-excessive nature of a protocol. 

\subsubsection{Negativity}

We move to computable entanglement as characterised by negativity~\cite{negativity} and show that all protocols in which entanglement is measured by this
quantity are non-excessive if the dimension of $A$ is less than 3. We provide a simple
explicit example of a state in $3\times2\times2$ dimension which allows for excessive entanglement gain.

We first prove a lemma that reveals dimensionality dependent inequality for the negativity which we will then exploit.
Recall that negativity of a bipartite state $\rho_{XY}$ is defined as
 \begin{equation}
N_{X:Y}=\frac{\|{\rho_{XY}^{PT}}\|-1}{2},
\label{negativity}
 \end{equation}
where PT indicates the partial transposition with respect to subsystem $X$ and $\|\sigma\|=\Tr\sqrt{\sigma^\dag\sigma}$ denotes the trace norm of a generic operator $\sigma$.

\begin{lemma}
\label{LEM_NEG}
The following negativity inequality holds for arbitrary tripartite system in a pure state:
\begin{equation}
\label{LEM_DNEG}
\sqrt{\tfrac{2}{d_A(d_A-1)}} N_{A:CB} \le N_{AC:B} + N_{AB:C},
\end{equation}
where $d_A$ is the rank of the reduced state of Alice.
\end{lemma}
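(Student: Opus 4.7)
The plan is to reduce the claimed negativity inequality to a sub-additivity statement for the linear entropy $S_L(\rho)=1-\Tr\rho^2$ of pure tripartite states, glued together by Cauchy--Schwarz on the one side and a dropping of non-negative cross terms on the other.

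First, for a pure state across any bipartition with Schmidt eigenvalues $\{p_i\}$ the negativity has the closed form $\sum_{i<j}\sqrt{p_i p_j}$. Writing $\{\lambda_i\}_{i=1}^{d_A}$, $\{\mu_k\}$ and $\{\nu_n\}$ for the spectra of $\rho_A$, $\rho_B$ and $\rho_C$, I would apply Cauchy--Schwarz to the $\binom{d_A}{2}$-term sum for $N_{A:CB}$ to get $\sqrt{\tfrac{2}{d_A(d_A-1)}}\,N_{A:CB}\le\sqrt{(1-\Tr\rho_A^2)/2}$, and discard non-negative cross terms in $\bigl(\sum_{k<l}\sqrt{\mu_k\mu_l}\bigr)^2$ to get $N_{AC:B}\ge\sqrt{(1-\Tr\rho_B^2)/2}$ and analogously $N_{AB:C}\ge\sqrt{(1-\Tr\rho_C^2)/2}$. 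These two bounds turn the lemma into the triangle inequality $\sqrt{1-\Tr\rho_A^2}\le\sqrt{1-\Tr\rho_B^2}+\sqrt{1-\Tr\rho_C^2}$.

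Using $\sqrt{a+b}\le\sqrt a+\sqrt b$, this triangle inequality follows from sub-additivity of the linear entropy, $\Tr\rho_B^2+\Tr\rho_C^2\le 1+\Tr\rho_A^2$, which is what I would prove next for pure $\ket\psi_{ABC}$. Inserting the Schmidt decomposition $\ket\psi=\sum_i\sqrt{\lambda_i}\ket{i}_A\ket{\phi_i}_{BC}$ and setting $\sigma_i=\Tr_C\proj{\phi_i}$ and $\tau_i=\Tr_B\proj{\phi_i}$, I can re-express the pair traces via the swap test on two copies of $BC$,
\begin{equation*}
\Tr(\sigma_i\sigma_j)+\Tr(\tau_i\tau_j)=\bra{\phi_i\phi_j}(F_{BB'}\otimes I+I\otimes F_{CC'})\ket{\phi_i\phi_j}.
\end{equation*}
The operator inequality $F_{BB'}\otimes I+I\otimes F_{CC'}\le I+F_{BB'}\otimes F_{CC'}$ is immediate on the joint $\pm 1$-eigenbasis of the two commuting swaps (equality on three of the four joint eigenspaces, and $-2\le 2$ on the remaining one), so taking expectation in $\ket{\phi_i\phi_j}$ and using orthonormality $\braket{\phi_i}{\phi_j}=\delta_{ij}$ yields $\Tr(\sigma_i\sigma_j)+\Tr(\tau_i\tau_j)\le 1+\delta_{ij}$. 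Summing this against $\lambda_i\lambda_j$ in the identity $\Tr\rho_B^2+\Tr\rho_C^2=\sum_{ij}\lambda_i\lambda_j\,[\Tr(\sigma_i\sigma_j)+\Tr(\tau_i\tau_j)]$ produces exactly $1+\Tr\rho_A^2$, delivering the required sub-additivity.

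I expect the main obstacle to be that last step, establishing sub-additivity of $S_L$ for the reductions of a pure tripartite state; the swap-operator inequality is the clean device that closes it and avoids any appeal to R\'enyi-entropy inequalities, which are known to fail sub-additivity in general. The Cauchy--Schwarz and cross-term bounds are routine and are precisely what produce the dimensional prefactor $\sqrt{2/[d_A(d_A-1)]}$ appearing in the statement; inspection of the chain shows that equality is attained, for instance, when $\rho_A$ is maximally mixed on a $d_A$-dimensional support and one of $\ket{\phi_i}$'s marginals is pure.
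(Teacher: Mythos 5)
Your proof is correct and follows essentially the same route as the paper's: a Cauchy--Schwarz lower bound on $N_{A:CB}$ producing the factor $\sqrt{2/[d_A(d_A-1)]}$, a cross-term-dropping lower bound on $N_{AC:B}+N_{AB:C}$, and a reduction to sub-additivity of the linear entropy for pure tripartite states. The only difference is that the paper simply cites that sub-additivity (Ref.~\cite{PhysRevA.75.052324}) whereas you prove it from scratch with the swap-operator inequality $F_{BB'}\otimes I+I\otimes F_{CC'}\le I+F_{BB'}\otimes F_{CC'}$, which makes your argument self-contained but does not change the underlying strategy.
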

\begin{proof}
Let us write the global pure state $\ket{\psi}$ in its Schmidt forms for various bipartitions:
\begin{equation}
\label{schmidt}
\begin{aligned}
\ket{\psi} &=\sum_{\alpha=1}^{d_A}{\sqrt{p_{\alpha}}\ket{\alpha}_{A}\ket{\phi_{\alpha}}_{BC}}\\
&=\sum_{\beta=1}^{d_B}{\sqrt{q_{\beta}}\ket{\beta}_{B}\ket{\chi_{\beta}}_{AC}}\\
&=\sum_{\gamma=1}^{d_C}{\sqrt{r_{\gamma}}\ket{\gamma}_{C}\ket{\xi_{\gamma}}_{AB}},
\end{aligned}
 \end{equation}
where we have introduced suitable orthonormal bases.
In this notation, the respective negativities are:
\begin{equation}
\label{schmidtneg}
\begin{aligned}
N_{A:CB}&=\frac{1}{2}\sum_{\alpha \neq a}{\sqrt{p_\alpha p_a}},\\
N_{AC:B}&=\frac{1}{2}\sum_{\beta\neq b}{\sqrt{q_\beta q_b}},\\
N_{AB:C}&=\frac{1}{2}\sum_{\gamma \neq c}{\sqrt{r_\gamma r_c}}.
\end{aligned}
\end{equation}
Our starting point is the sub-additivity of linear entropy~\cite{PhysRevA.75.052324}, which in present notation reads:
\begin{equation}
\sum_{\alpha \neq a}{p_\alpha p_a}\ \le \sum_{\beta \neq b}{q_\beta q_b} + \sum_{\gamma \neq c }{r_\gamma r_c}.
\label{S}
\end{equation}
We obtain the lower bound on the left-hand side by noting that the sum can be interpreted as the length of vector $(\sqrt{p_1 p_2}, \sqrt{p_1 p_3}, \dots,\sqrt{p_{d_A-1} p_{d_A}})$,
whereas its inner product with vector $(\frac{1}{2}, \frac{1}{2},\dots,\frac{1}{2})$ gives the negativity $N_{A:CB}$.
By the Cauchy-Schwatrz inequality the lower bound is
\begin{equation}
\label{2}
\frac{4}{d_A(d_A-1)} N_{A:CB}^2 \le \sum_{\alpha \neq a}{p_\alpha p_a}.
\end{equation}
For the upper bound to the right-hand side of (\ref{S}) consider
\begin{equation}
\label{1}
\begin{aligned}
(N_{AC:B}+N_{AB:C})^2 &\ge N_{AC:B}^2+N_{AB:C}^2 \\
&= \frac{1}{4} \sum_{\beta \neq b}{\sqrt{q_\beta q_b}}\sum_{{\beta\prime} \neq {b \prime}}{\sqrt{q_{\beta\prime} q_{b \prime}}} \\
&+ \frac14 \sum_{\gamma \neq c}{\sqrt{r_\gamma r_c}} \sum_{\gamma\prime \neq c \prime}{\sqrt{r_{\gamma\prime} r_{c \prime}}}\\
&\ge \frac{1}{2} \sum_{\beta \neq b}{q_\beta q_b}+\frac{1}{2} \sum_{\gamma \neq c}{r_\gamma r_c}.
\end{aligned}
\end{equation}
The last inequality holds due to the fact that, in the above sums, the combination of two pairs of equal indexes occurs twice,
e.g. we get $q_\beta q_b$ by multiplying terms with $\beta=\beta\prime$ and $b = b \prime$ and also $\beta = b \prime$ and $b =\beta\prime$. 
All the remaining terms are positive, hence the inequality.
By combining the lower bound and the upper bound we arrive at inequality~(\ref{LEM_DNEG}).
\end{proof}
We are now ready to make our main statement about excessiveness in terms of negativity.

\begin{theorem} \label{TH_NEG_INEQ} 
The following inequality holds for all pure states if and only if subsystem $A$ is a qubit:
\begin{equation}
\label{TH_NEG}
N_{A:CB} \le N_{AC:B} + N_{AB:C}.
\end{equation}

\end{theorem}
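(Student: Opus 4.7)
The plan is to prove the two directions separately, using Lemma~\ref{LEM_NEG} for sufficiency and producing an explicit counterexample for necessity.

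\textit{Sufficiency.} If $A$ is a qubit then the rank $d_A$ of $\rho_A$ satisfies $d_A\le 2$. The case $d_A=1$ is trivial, since then $\ket{\psi}$ is a product across the $A:CB$ cut and $N_{A:CB}=0$. For $d_A=2$ the prefactor $\sqrt{2/[d_A(d_A-1)]}$ appearing in Eq.~(\ref{LEM_DNEG}) collapses to unity, and Lemma~\ref{LEM_NEG} reduces immediately to Eq.~(\ref{TH_NEG}).

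\textit{Necessity.} It suffices to exhibit a single pure state in $3\otimes 2\otimes 2$ that violates Eq.~(\ref{TH_NEG}); any larger Alice's Hilbert space is then handled by embedding $\mathbb{C}^3$ inside it. My candidate is
\begin{equation}
\ket{\psi}=\tfrac{1}{\sqrt{3}}\bigl(\ket{0}_A\ket{00}_{BC}+\ket{1}_A\ket{01}_{BC}+\ket{2}_A\ket{10}_{BC}\bigr),
\end{equation}
whose reduced state $\rho_A$ is maximally mixed on $\mathbb{C}^3$ with spectrum $\{1/3,1/3,1/3\}$, so Eq.~(\ref{schmidtneg}) gives $N_{A:CB}=1$. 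A direct partial trace shows that $\rho_B$ and $\rho_C$ both have spectrum $\{2/3,1/3\}$, yielding $N_{AC:B}=N_{AB:C}=\sqrt{2}/3$. Since $2\sqrt{2}/3<1$, the inequality fails and the qutrit $A$ is enough to spoil Eq.~(\ref{TH_NEG}).

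The only conceptually delicate step is the choice of the counterexample. With $d_A=3$, Lemma~\ref{LEM_NEG} only forbids $N_{A:CB}>\sqrt{3}(N_{AC:B}+N_{AB:C})$, so an arbitrary tripartite state need not saturate Eq.~(\ref{TH_NEG}). The state above is tailored so that $A$ is correlated with $BC$ through an orthonormal triple (pushing $N_{A:CB}$ to its maximum $1$), while the qubit dimensions of $B$ and $C$ force each of $\rho_B$ and $\rho_C$ to carry the unbalanced spectrum $\{2/3,1/3\}$. Once the candidate is fixed, the remaining computation of the three Schmidt spectra is routine, and the theorem follows.
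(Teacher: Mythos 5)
Your proof is correct and follows essentially the same route as the paper: Lemma~\ref{LEM_NEG} with $d_A=2$ gives the qubit direction, and your $3\otimes 2\otimes 2$ counterexample is in fact the paper's state~(\ref{violation}) up to a relabeling of Alice's basis ($0\to 2$, $1\to 0$, $2\to 1$), with the same values $N_{A:CB}=1$ and $N_{AC:B}=N_{AB:C}=\sqrt{2}/3$. The added remarks on the trivial $d_A=1$ case and on embedding $\mathbb{C}^3$ into larger spaces are sound.
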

\begin{proof} 
Using (\ref{LEM_DNEG}) with $d_A = 2$ we find exactly the non-excessiveness condition.
If $d_A > 2$ we provide an explicit \emph{minimal} example (in terms of size of the subsystems) of a state that leads to excessive protocol.
Choose $d_A = 3$, $d_B = d_C = 2$, and consider the state
\begin{equation}
\ket{\psi}=\frac{1}{\sqrt{3}}(\ket{200}+\ket{001}+\ket{110}),
\label{violation}
\end{equation}
for which the communicated negativity is given by ${N_{AB:C}=}\sqrt{2}/3\approx0.471$ while the negativity gain is ${N_{A:CB}-N_{AC:B}}=1-\sqrt{2}/3\approx0.529$.
\end{proof}

Theorem \ref{TH_NEG_INEQ} has a consequence for  the ongoing effort aimed at unifying the current approaches to quantum correlations~\cite{modi2010,giorgi2011,modi-vedral}. 
Their goal in this respect is to quantify general quantum correlations (including entanglement, quantum discord, etc.) with the same mathematical forms, thus allowing for direct comparison of their respective values. 
We argue that discord-like quantity on equal footing with negativity will not satisfy physically plausible properties.
Namely, there are two properties that we would expect from a unified approach:
(i) Since all non-classical correlations of pure states should be due to quantum entanglement, in a unified approach discord-like quantity should reduce to negativity for pure states;
(ii) We would expect the discord-like quantity to measure non-classicality of communication as it is the case for other measures~\cite{bounds1, bounds2}, and therefore in a unified approach the discord-like quantity should bound the negativity gain. However, the violation of inequality~\eqref{TH_NEG} shown by higher-dimensional systems implies that there cannot be a discord-like quantity that reduces
to negativity for pure states and also respects condition (ii).

\subsubsection{Logarithmic negativity}

A measure related to negativity is the {\it logarithmic negativity}, defined as~\cite{negativity,logneg1}
\begin{equation}
L_{X:Y}=\log_2\|\rho^{PT}_{XY}\|=\log_2(2N_{X:Y}+1).
\end{equation}

Similarly to what has been discussed above, we can identify excessive protocols based on the use of logarithmic negativity and a system $A$ of sufficiently high dimensionality. 
\begin{theorem} \label{TH_LNEG_INEQ} For pure states of three subsystems $A$, $B$, and $C$ with $d_A=2$ we have
\begin{equation}
\label{TH_L}
L_{A:CB}-L_{AC:B}\le L_{AB:C}.
\end{equation}
\end{theorem}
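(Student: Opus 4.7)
The plan is to reduce the logarithmic-negativity inequality to the negativity inequality already established in Theorem~\ref{TH_NEG_INEQ}. Since $d_A=2$, Theorem~\ref{TH_NEG_INEQ} gives us
\begin{equation*}
N_{A:CB} \le N_{AC:B} + N_{AB:C},
\end{equation*}
and this is essentially the only non-trivial ingredient we need.

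First I would invoke the definition $L_{X:Y}=\log_2(2N_{X:Y}+1)$ from the paragraph preceding the theorem, which lets me rewrite the claim $L_{A:CB}\le L_{AC:B}+L_{AB:C}$ in the equivalent multiplicative form
\begin{equation*}
2N_{A:CB}+1 \;\le\; (2N_{AC:B}+1)(2N_{AB:C}+1).
\end{equation*}
Next I would expand the right-hand side:
\begin{equation*}
(2N_{AC:B}+1)(2N_{AB:C}+1) = 1 + 2N_{AC:B} + 2N_{AB:C} + 4N_{AC:B}N_{AB:C}.
\end{equation*}
Since negativities are non-negative, the cross term $4N_{AC:B}N_{AB:C}$ only helps, so it is enough to show
\begin{equation*}
2N_{A:CB} \;\le\; 2N_{AC:B} + 2N_{AB:C},
\end{equation*}
which is precisely Theorem~\ref{TH_NEG_INEQ} specialised to $d_A=2$. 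Reassembling the logarithms completes the argument.

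Because the logarithmic negativity is a concave-in-negativity reparameterisation (namely $\log_2(1+2x)$), sub-additivity at the level of $N$ automatically propagates to $L$; the cross term in the product of $(1+2N)$ factors is the slack that makes this implication go through in one line. There is no real obstacle: the only place where the qubit assumption $d_A=2$ enters is through the underlying negativity inequality, exactly as in Theorem~\ref{TH_NEG_INEQ}, and no separate analysis in terms of Schmidt coefficients is required. (One could also note that this shows the excessiveness for logarithmic negativity must appear, if at all, already for $d_A\ge 3$, where an explicit example analogous to~(\ref{violation}) can be inherited.)
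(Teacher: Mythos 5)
Your proof is correct and follows essentially the same route as the paper: both reduce inequality~(\ref{TH_L}) to the negativity inequality~(\ref{TH_NEG}) via the elementary fact that $(2N_{AC:B}+1)(2N_{AB:C}+1)\ge 2N_{AC:B}+2N_{AB:C}+1$, the paper phrasing this in the logarithmic domain and you in the equivalent multiplicative form. No substantive difference.
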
 \begin{proof} The proof follows trivially from
inequality~(\ref{TH_NEG}). Multiply inequality~(\ref{TH_NEG}) by two and add one to both sides. Taking the logarithm gives us  
\begin{equation}\label{2nn}
\log_2(2N_{A:CB}+1)\le\log_2(2N_{AC:B}+2N_{AB:C}+1).
\end{equation}
The thesis follows if we combine $\log(2N_{AC:B}+2N_{AB:C}+1)\le\log(2N_{AC:B}+1)+\log(2N_{AB:C}+1)$ with inequality~(\ref{2nn}). \end{proof}

An extensive numerical analysis performed by testing uniformly picked random pure states suggests that, differently from what has been found for the negativity inequality~(\ref{TH_NEG}), inequality~(\ref{TH_L}) always holds for $d_A=3$. The first example of a pure state that does not satisfy inequality~(\ref{TH_L})
has been encountered for $d_A=4$ with $B$ and $C$ both being qubits. For instance, for state
\begin{equation}
\ket{\psi}=\frac{1}{\sqrt{103}}\left(10\ket{000}+\ket{110}+\ket{201}+\ket{311}\right),
\end{equation}
the communicated logarithmic negativity is $L_{AB:C}
\approx 0.352$ while the corresponding gain in logarithmic negativity is $\approx 0.363$, which is excessive. 
Similar to violations in the case of negativity, this happens due to the dimension of $A$. 
We conjecture that inequality~(\ref{TH_L}) holds in Hilbert spaces of arbitrary dimension where subsystem $A$ has dimension less than $4$.

\subsection{Excessive protocols}

We now move to a single-parameter family of states which allows for
various possibilities of entanglement gain. We will emphasise excessive
protocols as they are our main focus here. Recall that a protocol
is said to be excessive if $\Delta E>E_{\mathrm{com}}$, where $E_{\mathrm{com}}=E_{C:AB}$
is the entanglement of the carrier with the rest and
$\Delta E=E_{\mathrm{fin}}-E_{\mathrm{in}}$, with $E_{\mathrm{in}}=E_{AC:B}$
and $E_{\mathrm{fin}}=E_{A:BC}$ respectively being the initial and final entanglement between Alice's and Bob's laboratories.

Notice that an excessive protocol can, in principle, be realised for
all four possible scenarios that correspond to whether the initial and/or communicated entanglement are vanishing. 
The first example of an excessive protocol demonstrating the possibility to distribute entanglement via a separable carrier state 
 had $E_{\mathrm{in}}=0$ and $E_{\mathrm{com}}=0$~\cite{cubitt}, whereas in Ref.~\cite{bounds2}
an example of an excessive protocol was given with $E_{\mathrm{in}}>0$ and $E_{\mathrm{com}}=0$. 
These results may also be understood as
a direct consequence of the fact that a tripartite density matrix
$\rho_{ABC}$ can have different entanglement across different bipartitions.
For example, in the entanglement distribution via separable states
scenario one has $E_{AC:B}=E_{C:AB}=0$, yet $E_{A:BC}>0$.
In both the above examples the communicated entanglement is zero as
the carrier $C$ remains unentangled, at all times, with other subsystems. We will provide here for the first time examples of
excessive protocols with non-zero communicated entanglement and
give new examples for the scenario where $E_{\mathrm{in}}>0$ and $E_{\mathrm{com}}=0$.
It should be noted that all such scenarios can be encompassed by a single-parameter family of density matrices, as will be shown bellow.

We also observe that the excessive nature of a protocol may depend on whether an intermediate particle (which does not influence the initial entanglement) is sent in advance of the designated carrier. 
We call this strategy \emph{catalysis of excessiveness}. It can be described as follows. 
Suppose a protocol $P$ is excessive over a certain range $\Delta$ of a given parameter
associated with the state (it is not important to specify which parameter). Let the carrier be denoted by $C$.
Now consider another protocol $P^{\prime}$, where an intermediate
particle $C^{\prime}$ is transmitted without changing the initial
entanglement. This implies that the initial entanglement $E_{{\rm in}}$
is the same for both protocols before the transmission of the designated carrier
$C$. Let $\Delta^{\prime}$ be the range over which $P^{\prime}$
is excessive. We find that not only $P^{\prime}$ is excessive over
a wider range, that is $\Delta^{\prime}>\Delta$, but also that, within
the range $\Delta$ where $P$ and $P^{\prime}$ are both excessive, the
entanglement gain in $P^{\prime}$ is greater than the corresponding one in $P$. Thus an intermediate carrier,
which does not change the initial entanglement, can make an excessive protocol better. We will give
explicit examples demonstrating this effect later.  

Consider a five qubit density matrix obtained by applying local
quantum channels on a five-qubit {\it absolute maximally entangled state} (AME)
\cite{AME}. AME states have the property
to be maximally entangled across every bipartition. Thus we
might expect such states to be more robust to local noise and
conceivably good candidates to exhibit excessive protocols. The five-qubit pure AME state is given by
\begin{eqnarray}
\vert\psi\rangle & = & \frac{1}{4}(\vert00000\rangle+\vert10010\rangle+\vert01001\rangle+\vert10100\rangle\nonumber \\
 &  & +\vert01010\rangle-\vert11011\rangle-\vert00110\rangle-\vert11000\rangle\nonumber \\
 &  & +\vert11101\rangle-\vert00011\rangle-\vert11110\rangle-\vert01111\rangle\nonumber \\
 &  & +\vert10001\rangle-\vert01100\rangle-\vert10111\rangle+\vert00101\rangle).\label{AME-state}
\end{eqnarray}
We then construct the density matrix $\rho\left(q\right)$ resulting from the application of two specific local quantum channels on the first two qubits of such state. Such channels, which we label $\Lambda_{1,2}~(k=1,2)$, are defined in
terms of their respective Kraus operators as \begin{equation*}
\begin{aligned}
K_{0}^{(1)}&=\frac{1}{\sqrt2}\,\openone,\quad K_{i}^{(1)}=\frac{1}{\sqrt6}\,\sigma_{i},\\
K_{0}^{(2)}&=\sqrt{q}\,\openone,  \quad K_{i}^{(2)}=\sqrt{\frac{1-q}{3}}\,\sigma_{i}.
\end{aligned}
\end{equation*}
for $q\in[0,1]$ and $i=x,y,z$. Correspondingly, the five-qubit state is now 
\begin{equation}
\rho\left(q\right)  =  \left(\Lambda_{1}\otimes\Lambda_{2}\otimes\mathcal{I}_{3}\otimes\mathcal{I}_{4}\otimes\mathcal{I}_{5}\right)[\vert\psi\rangle\langle\psi\vert].\label{EQ_RHO}
\end{equation}
Channel $\Lambda_{1}$ is always entanglement breaking, and thus
the first qubit becomes unentangled with the rest of the system. Channel $\Lambda_{2}$ is entanglement breaking for $0\leq q\leq0.5$. The application of such channels breaks the symmetry initially present in the AME state across its bipartitions. The excessive nature of the entanglement distribution protocols that we are going to describe is then the consequence of such breakdown of symmetry. Table~\ref{TAB_SEP} summarizes the separability properties of the state across the relevant bipartitions.

\begin{table}
\begin{tabular}{|c | c | c|}
\hline 
\hspace{.25cm} Partition \hspace{.25cm} & \hspace{.5cm} $0 \le q \le 0.5$ \hspace{.25cm} & 
\hspace{.25cm} $0.5 < q \le 1$ \hspace{.25cm} \\
\hline \hline 
$12:345$  & separable  & entangled  \\
\hline 
$2:1345$  & separable  & entangled  \\
\hline 
$1:2345$  & \multicolumn{2}{c|}{separable} \\
\hline 
$3:1245$  & \multicolumn{2}{c|}{entangled} \\
\hline 
$13:245$  & \multicolumn{2}{c|}{entangled} \\
\hline 
$123:45$  & \multicolumn{2}{c|}{entangled} \\
\hline 
\end{tabular}\protect\caption{Separability properties of the five-qubit state $\rho(q)$ defined
in Eq.~(\ref{EQ_RHO}). The first column specifies the relevant bipartitions
of the five qubits whereas the second column indicates whether in
the corresponding bipartition the state is entangled. For the top two bipartitions
separability changes as parameter $q$ is tuned above $0.5$.}
\label{TAB_SEP}
\end{table}

Having set the resource state to use, we now present new excessive protocols for various combinations of initial and communicated entanglement. 
In each case we give two examples to highlight the fact that, given a
system of many particles in some specified quantum state, there could
be different tripartite configurations giving rise to an excessive
protocol of the \emph{same kind}.

\subsubsection{Excessive protocols with $E_{\mathrm{in}}=0$ and $E_{{\rm com}}>0$
\protect \\  ($E_{\mathrm{in}}>0$ and $E_{{\rm com}}=0$)}

\begin{figure}[!t]
\includegraphics[width=\columnwidth]{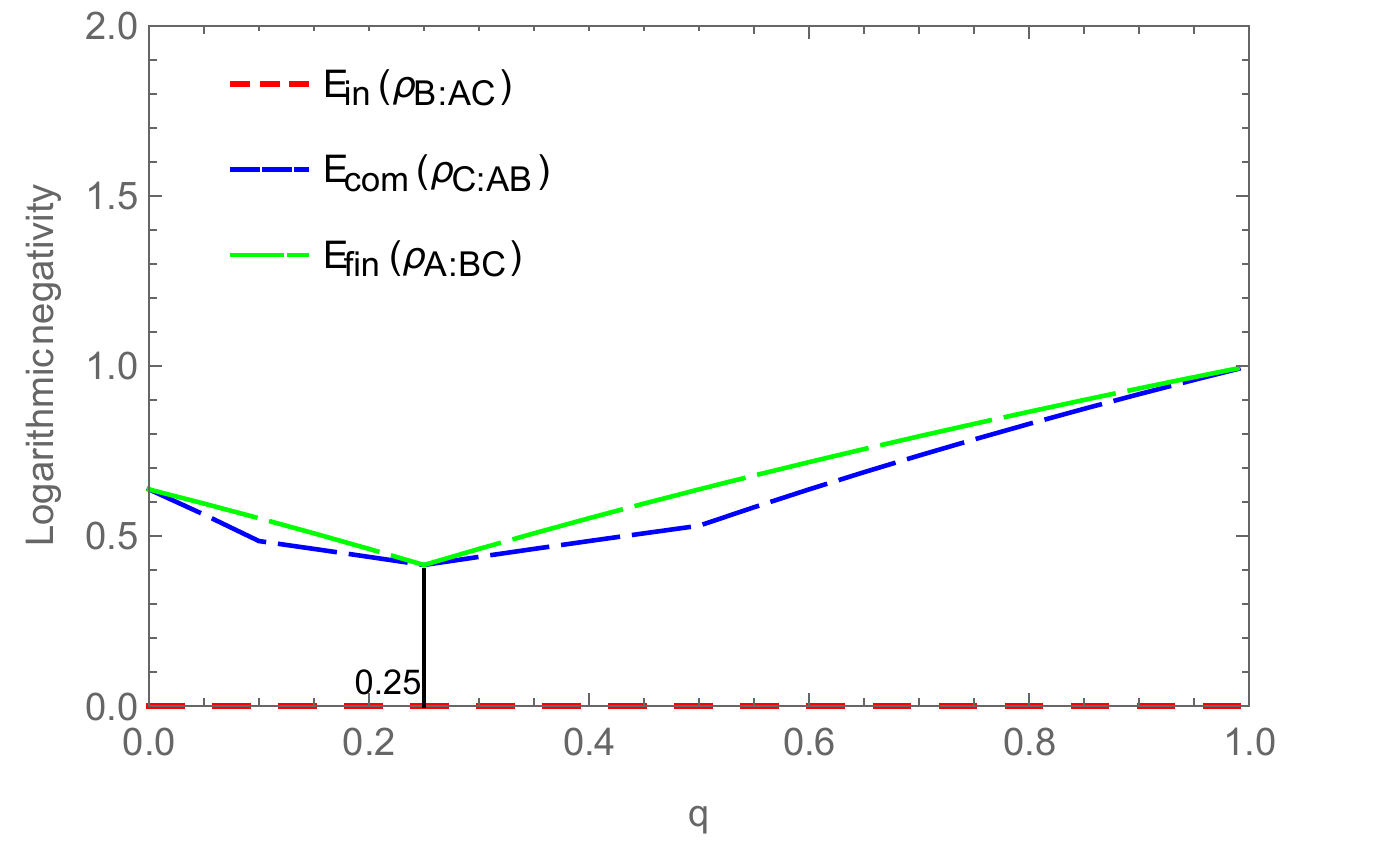}
\caption{(Color online) Excessive protocols with $E_{\mathrm{in}}=0$ and $E_{{\rm com}}>0$ realised through a local channel-affected five-qubit AME state.
For all values of $q$ except at $q=\{0,1/4,1\}$ we see that $E_{\mathrm{fin}}>E_{{\rm com}}$. The partitions used here are $A=\{2,4,5\}$, $B=\{1\}$ and $C=\{3\}$. 
By swapping qubits $1$ and $3$ we obtain excessive
protocols with $E_{\mathrm{in}}>0$ and $E_{{\rm com}}=0$. }
\label{FIG_SOM-1}
\end{figure}

Let us group the five qubits into the following subsystems $A=\{2,4,5\}$,
$B=\{1\}$, and $C=\{3\}$. Alice initially holds subsystems $A$ and
$C$ whereas Bob holds $B$. As seen in Table~\ref{TAB_SEP}, there
is no entanglement between the laboratories in this configuration, i.e. $L_{AC:B}=0~\forall q$.
 However, Fig.~\ref{FIG_SOM-1} demonstrates that sending
$C$ through a noiseless channel generates more final entanglement
than what was communicated (in terms of logarithmic negativity). More specifically 
\begin{equation}\label{LNEG99}
L_{A:BC}-L_{AC:B}>L_{AB:C}\quad\textrm{ for }\quad q\ne \{0,\tfrac{1}{4},1\}.
\end{equation}
The same kind of excessive protocol is obtained for yet another grouping
of qubits $A=\{1,4,5\}$, $B=\{2\}$, and $C=\{3\}$. As seen in Table~\ref{TAB_SEP},
the initial entanglement vanishes for $0\le q\le0.5$, whereas the
carrier particle is entangled for all $q$. Fig.~\ref{SOM_FIG_2} 
reveals that this protocol is thus excessive for $0.4<q\leq0.5$.

A new family of examples with $E_{\mathrm{in}}>0$ and $E_{{\rm com}}=0$
is obtained from the cases studied above by simply exchanging the roles of
subsystems $B$ and $C$. Protocols that were excessive before are still excessive after the swap, as it can be verified
by rewriting inequality~\eqref{LNEG99} as $L_{A:BC}-L_{AB:C}>L_{AC:B}$. By our analysis above, the
swap $B\leftrightarrow C$ also exchanges $E_{\mathrm{in}}$ and $E_{{\rm com}}$.

\begin{figure}
\includegraphics[width=\columnwidth]{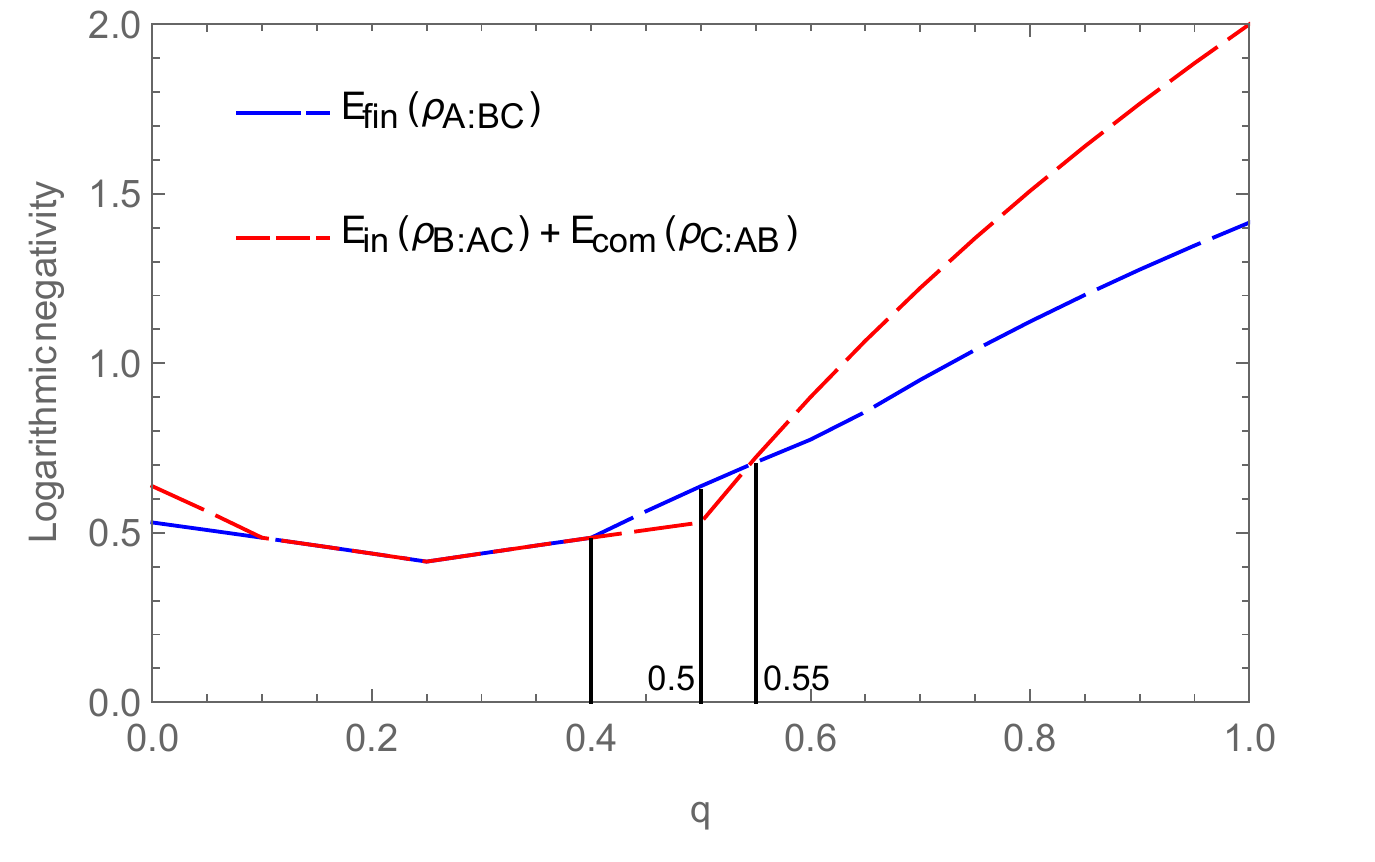}
\caption{(Color online) Excessive protocols with non-vanishing communicated entanglement. 
For $0.4<q\leq0.5$ the protocol is excessive with $E_{\mathrm{in}}=0$ and $E_{{\rm com}}>0$. 
For $0.5<q<0.55$ the protocol is excessive while satisfying $E_{\mathrm{in}}>0$ and $E_{{\rm com}}>0$. The partitions here are $A=\{1,4,5\}$, $B=\{2\}$ and $C=\{3\}$.}
\label{SOM_FIG_2}
\end{figure}

\subsubsection{Excessive protocols with $E_{\mathrm{in}}>0$ and $E_{{\rm com}}>0$}
\label{SEC_EX}

Consider again the grouping $A=\{1,4,5\}$, $B=\{2\}$ and $C=\{3\}$ of the particles in the state in Eq.~\eqref{EQ_RHO}.
This time, we take into account the range  $0.5<q\le1$ in which
the laboratories are initially entangled, i.e. $L_{AC:B}>0$. Under these conditions
the carrier particle is also entangled, i.e. $L_{AB:C}>0$. Fig.~\ref{SOM_FIG_2}
reveals that the protocol is excessive for $0.5<q<0.55$. Similar
conclusions hold under the swap $B\leftrightarrow C$.

\begin{figure}[!t]
\includegraphics[width=\columnwidth]{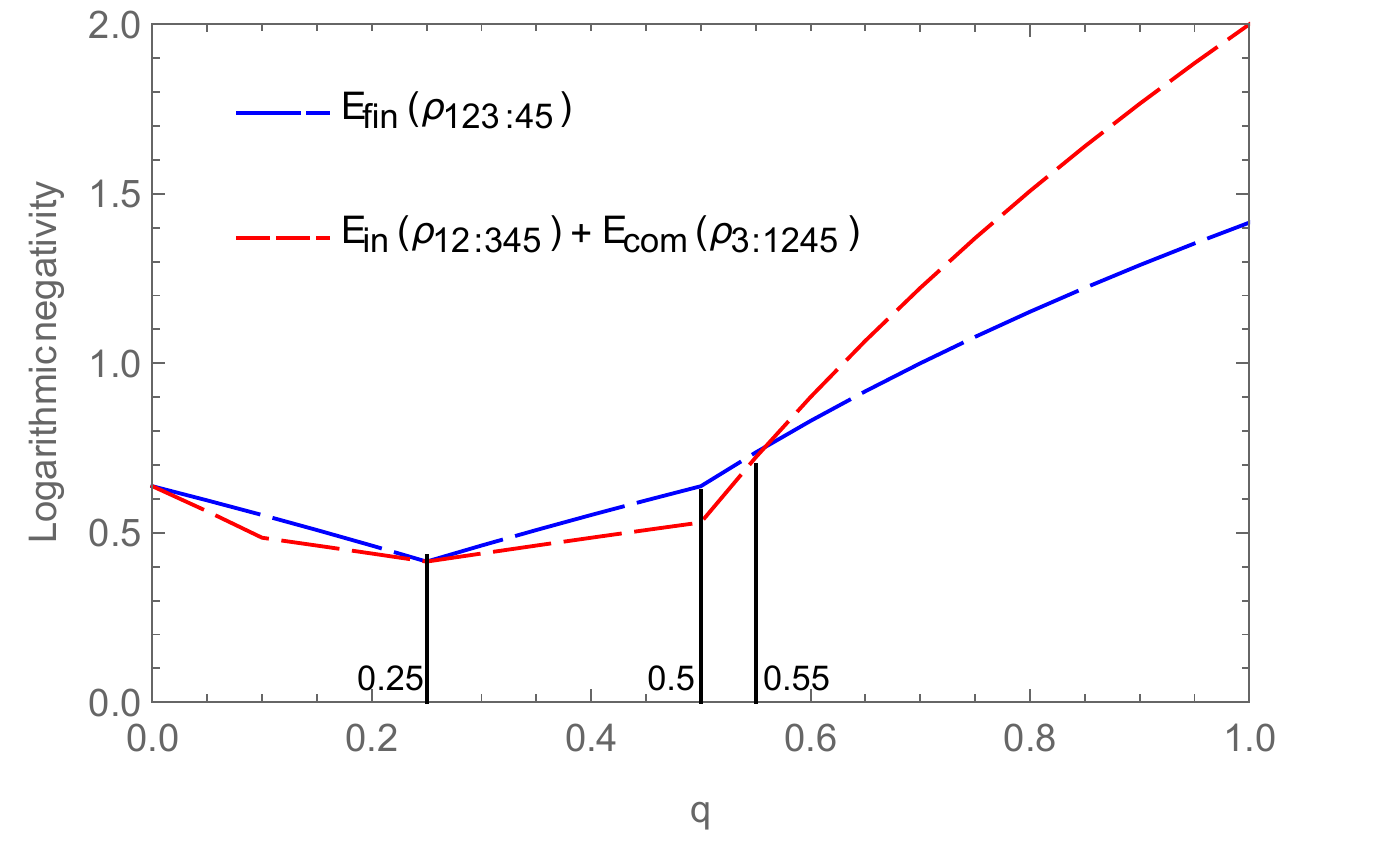}
\caption{(Color online) Catalysis of excessiveness.
Comparison of this plot with Fig.~\ref{SOM_FIG_2} shows that sending qubit $1$ in advance increases entanglement gain between the laboratories and the range of $q$ over which the protocol is excessive
while having no influence on the initial entanglement.}
\label{SOM_FIG_CAT}
\end{figure}

\subsubsection{Catalysis of excessiveness}
A form of catalysis of entanglement is implicitly present in the protocol for distribution with separable states \cite{cubitt}. There, particle $B$ which is separable initially from $A$ and $C$, is first sent to Bob's lab keeping vanishing entanglement between Alice and Bob. Next, Alice sends $C$, while  $E_{AB:C}=0$. However, $E_{A:CB}>0$ in the end. Particle $B$ worked as a \emph{catalyst}. 

Despite the analogies, the phenomenon dubbed here as {\it catalysis of excessiveness} is more general. In order to illustrate this, we discuss another example starting from the state $\rho(q)$ as in Eq.~\eqref{EQ_RHO}. Let us now group the five qubits as  $A = \{4,5\}$, $B = \{1,2\}$ and $C = \{3\}$. This implies that, compared with the example in the previous subsection, qubit $1$ has been sent by Alice to Bob {\it before} the protocol begins.
According to Table~\ref{TAB_SEP}, the initial states in both cases (before and after sending qubit $1$) have the same separability properties 
and one can verify that the actual logarithmic negativities are exactly the same.
Therefore, communication of qubit $1$ has no influence on the initial entanglement between the laboratories.

However, if we now send subsystem $C$ to Bob, we notice different behaviour of entanglement gain.  By computing the amount of initial, communicated, and final entanglement, we find the results displayed in  Fig.~\ref{SOM_FIG_CAT}, which show that the pre-delivery of a qubit to Bob's lab does influence the performance of entanglement distribution. Indeed, by comparing Fig.~\ref{SOM_FIG_2} and \ref{SOM_FIG_CAT}, we see that while in the former the protocol is excessive only for $0.4 < q <0.55$, the latter reveals excessiveness for all $q < 0.55$ (except for $q = \{0, 0.25\}$).
Since $q$ parameterises local noise acting on an individual qubit of the register, this can be regarded as increase in robustness of the protocol.
Furthermore, the actual gain of entanglement in the excessive part of the protocol is larger if qubit $1$ is communicated in advance, thus clarifying the catalytic role of such pre-step in the protocol. 

\section{Noisy environments}

\label{noise}

One of the main obstacles to the generation and preservation of entanglement
is the presence of noisy environments. As full noise-avoidance appears to be too demanding or costly to 
embody a viable way to circumvent the problem, a potential approach to face the challenge of noisy entanglement
distribution is to design protocols able to work well in such non-ideal conditions.

The noise affecting the communication channel that connects two laboratories will interfere mainly with the entanglement communicated between them. This
suggests that excessive protocols, that have a small amount of communicated
entanglement compared to the gain, could work better than non-excessive
ones. In this Section we verify the efficiency of concrete indirect
protocols for entanglement distribution in the presence of three typical
quantum noises. We also address the most extreme case of noisy channels, i.e. entanglement
breaking channels. The results that have been achieved through our analysis suggest that excessive protocols often allow
for significant amounts of distributed entanglement, even under the presence of rather strong noise.
As a quantitative measure of entanglement we use negativity and we focus on the situation where the three subsystems involved in the distribution schemes are all qubits. According to the analysis in Sec.~\ref{noiseless}, the
communicated entanglement always exceeds the entanglement increment
if the three qubit system is in a pure state but, since noise will
necessarily mix the system, excessive distribution becomes possible.

\subsection{Noisy Channels}

\label{channels}

We begin by introducing three standard channels modelling environmental
noise acting on two-level systems: dephasing channel, depolarising
channel, and amplitude damping channel. For a unified description,
all of them are represented in terms of Kraus operators involving
a single parameter that characterizes
the strength of the noise. 

\subsubsection{Entanglement breaking channel}

In general, if a channel produces separable output independent of
the input state, it is said to be entanglement breaking (EB).
As proven in Ref.~\cite{ebc}, a channel is EB if and
only if its action on any state can be written as a measure-and-prepare
positive operator-valued measurement (POVM) on the particle that goes through the channel. In the tripartite
scenario at the core of our work, the channel $\Phi_{C}$ acts on the ancillary particle $C$ that is communicated between the 
laboratories, so that the resulting state can be written as 
\begin{equation}
(\mathcal{I}_{AB}\otimes\Phi_{C})(\rho_{ABC})=\sum_{n}p_{n}\rho_{AB|n}\otimes\gamma_{n},\label{povm}
\end{equation}
where $p_{n}$ are the probabilities associated with the measurement outcomes that are part of the POVM performed on $C$, and $\rho_{AB|n}$ are the  states of $AB$ conditioned to the outcomes of the POVM measurement. Finally, $\gamma_{n}$ are
rank-one projectors (pure states) that one prepares on $C$, depending
on the result of the measurement.

Each of the channels that will be introduced in the following subsections has a critical value of its characteristic noise
parameter above which it becomes EB.

\subsubsection{Dephasing channel}

The dephasing channel captures the loss of coherence in a preferred
basis. The strength of this loss is given by the parameter $\delta_{\mathrm{ph}}$
and, if the preferred basis is chosen to be that embodied by the eigenbasis of the $\sigma_{z}$
Pauli matrix, the Kraus operators of the dephasing channel take the
form 
\begin{equation}
	K_{0}^{(\mathrm{ph})} =  \sqrt{1-\frac{\delta_{\mathrm{ph}}}{2}}\,\openone,\qquad K_{1}^{(\mathrm{ph})}  =  \sqrt{\frac{\delta_{\mathrm{ph}}}{2}}\,\sigma_{z}.
\label{dephasing}
\end{equation}
This channel is EB when $\delta_{\mathrm{ph}}=1$.

\subsubsection{Depolarising channel}

The depolarising channel describes the loss of coherence in any basis.
It is defined by the Kraus operators 
\begin{equation}
K_{0}^{(\mathrm{pol})}  = \sqrt{1-\delta_{\mathrm{pol}}}\,\openone,\qquad K_{x,y,z}^{(\mathrm{pol})} = \sqrt{\frac{\delta_{\mathrm{pol}}}{3}}\,\sigma_{x,y,z}.\label{depolarising}
\end{equation}
The channel is EB for $\delta_{\mathrm{pol}}\in[1/2,1]$.

\subsubsection{Amplitude damping channel}

The amplitude damping channel describes energy dissipation from the
system. The Kraus operators for this channel are 
\begin{equation}
\begin{aligned}
K_{1}^{(\mathrm{ad})} & =  |0\rangle\langle0|+\sqrt{1-\delta_{\mathrm{ad}}}|1\rangle\langle1|,\label{amplitude}\\
K_{2}^{(\mathrm{ad})} &=  \sqrt{\delta_{\mathrm{ad}}}|0\rangle\langle1|.
\end{aligned}
\end{equation}
Similarly to the dephasing channel, the amplitude damping channel
is EB for $\delta_{\mathrm{ad}}=1$. 

\begin{figure}[t]
\includegraphics[width=0.5\textwidth]{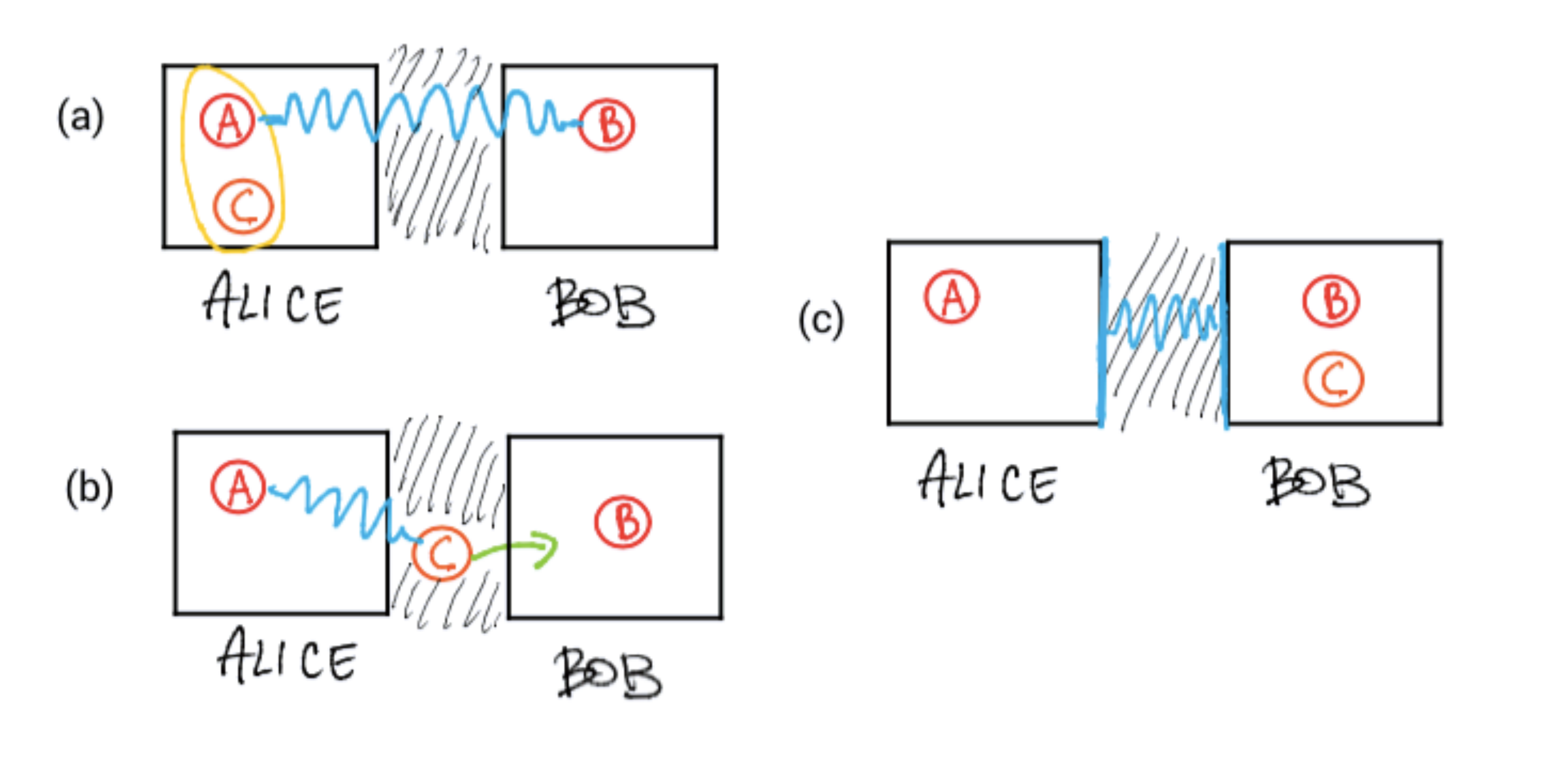}
\caption{(Color online) Indirect entanglement distribution via noisy channel.
(a) $A$ and $B$ are initially already displaced and correlated when
$C$ interacts locally with $A$. (b) $C$ is communicated via noisy
channel to Bob's laboratory. (c) The final entanglement between the
laboratories is $E_{A:BC}$.}
\label{p1} 
\end{figure}

\begin{figure*}[t]
\includegraphics[width=1\textwidth]{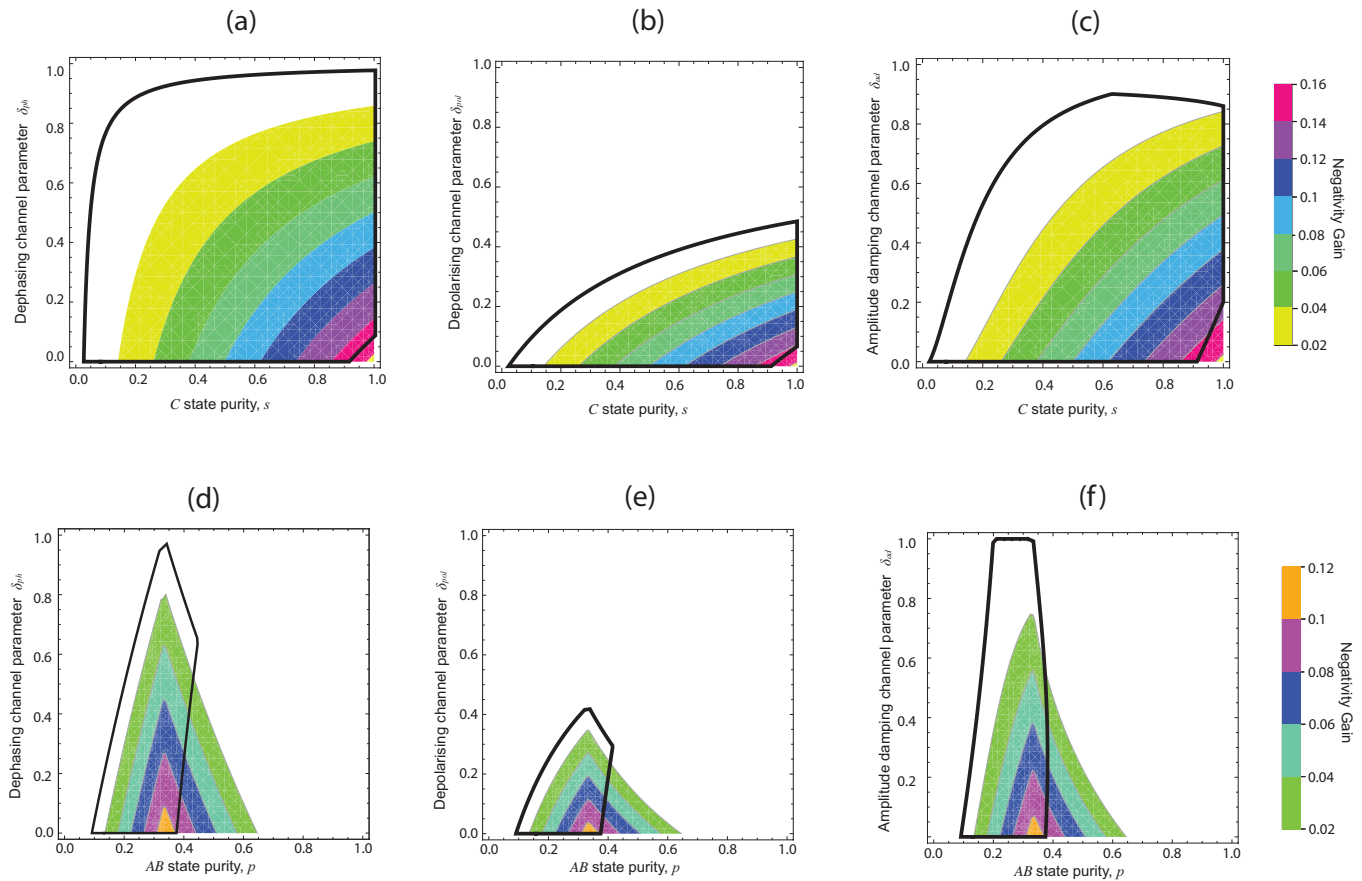}
\caption{(Color online) Entanglement gain in the indirect protocol of Fig.
\ref{p1}. Results for different noises are presented in columns.
The thick black lines include the regions in which the protocol is
excessive. The principal system begins in a Werner state, Eq.~\eqref{werner}.
The upper plots, (a) to (c), present the gain as a function of purity
of ancilla, $s$ in Eq.\eqref{ALPHAC}. For these plots entanglement
admixture in the Werner state is fixed to $p=0.34$, but essentially
the same qualitative results are obtained for other admixtures. The
optimal gain is reached for pure state of $C$ and lies outside the
excessive border for low noise parameters. \\
The lower plots, (d) to (f), present entanglement gain as a function
of purity of the principal system, $p$ in Eq.~\eqref{werner}. For
these plots $s=2/3$ but again results are similar for different values
of $s$. Notice that for all noises the optimal protocol is for $p=1/3$
(disentangled initial state) and it is always excessive (whenever
effective).}
\label{r1} 
\end{figure*}


\subsection{Indirect distribution via noisy channels}

\label{protocols}

In the numerical studies presented in this subsection we extend the examples
presented in Ref.~\cite{exp1}. Consider the scenario depicted in
Fig.~\ref{p1}. The particles $A$ and $B$ managed by Alice and Bob are already at their respective laboratories. 
We assume they are prepared in the Werner state 
\begin{equation}
\alpha_{AB}=p\ket{\phi_{+}}\bra{\phi_{+}}+\frac{(1-p)}{4}\openone_{4}\label{werner}
\end{equation}
where $\ket{\phi_{+}}=(\ket{00}+\ket{11})/\sqrt{2}$ is a Bell state ($\ket{0}$ and $\ket{1}$ are the eigenstates of
local $\sigma_{z}$ operator) and $\openone_{4}$ is the identity
matrix for a two-qubit system. For $p\le1/3$, the Werner state is separable. Particle $C$ starts in a state of
the form \cite{sep2}:
\begin{equation}
\alpha_{C}=\frac{1}{2}(\openone_{2}+s\sigma_{x}),\label{ALPHAC}
\end{equation}
where $s\in[0,1]$. We choose the interaction between $A$ and $C$ to be the controlled-phase (c-phase) gate, as in
Ref.~\cite{sep2}. Extensive numerical study suggests that this is
the optimal choice. The noisy mechanism is supposed to affect the channel that connects the remote laboratories, and thus acts on $C$ as it travels
from Alice to Bob. The c-phase interaction is assumed to take place at Alice's lab. We then calculate how much is the entanglement gain of such indirect protocol.

The results of our quantitative analysis are presented in Fig.~\ref{r1}. Panels (a)-(c) demonstrate
that the largest entanglement gain is achieved when particle $C$ is prepared in a pure state, independently of the type of noise in the
channel. Furthermore, this maximal gain is achieved via a non-excessive
protocol, again regardless of the applied noise. However, the parameter
regions where the protocol is non-excessive are very small. 
If the noise is not too weak, and if Alice is not able to put the ancilla $C$ in a very pure state, then there will be more entanglement gain than communicated. Notice that this is the case for almost all values of the noise parameters.
Complementary results are obtained if negativity gain is calculated
as a function of entanglement admixture in the Werner state, as shown in Fig.~\ref{r1} (d)-(f). This time the parameter range of excessive distribution is reduced, although it always contains the protocols that achieve maximum
gain. Note that, independent of the noisy channel and the strength
of noise, the largest negativity increment is obtained for initial
Werner states lying on the separability border, i.e. for $p=1/3$. To some
extent, this can be intuitively explained: the degree of entanglement in a
maximally entangled state cannot be improved and the increment of
entanglement of a ``deeply'' separable state has to first overcome
the distance to the separability border.

\subsection{Direct-then-indirect distribution via noisy channels}

\begin{figure}
\includegraphics[width=0.5\textwidth]{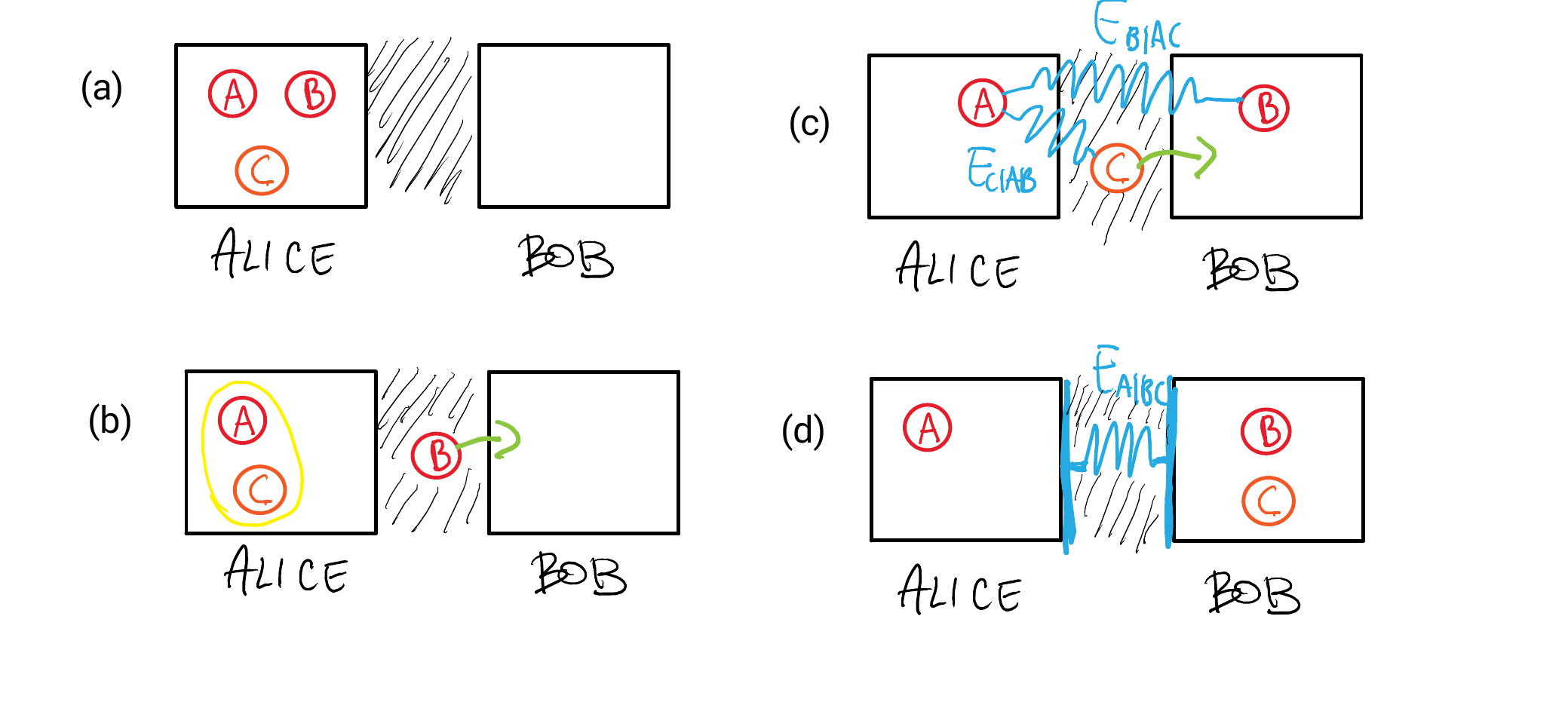}
\caption{(Color online) Direct-then-indirect distribution via noisy channel.
(a) Now $A$ and $B$ are initially correlated locally in Alice's
laboratory. (b) While $B$ travels through the channel (direct protocol),
$A$ and $C$ interact with each other. (c) Finally $C$ reaches Bob,
(d) changing the entanglement between the labs. }
\label{p2} 
\end{figure}

The protocol described in the previous subsection assumes that particles $A$ and $B$ are already far from each other, yet prepared in a (partially) entangled state. Clearly, this has a cost, as Alice and Bob should be able to generate
it using a noisy channel that cannot be bypassed in our scenario. In order to include the cost of preparing
such initial state, we now study the protocol illustrated in Fig.~\ref{p2}.
Alice, who initially holds both particles $A$ and $B$, prepares the Werner state in Eq.~(\ref{werner}). 
Next, particle $B$ is communicated via the noisy channel to Bob. In addition to such
direct protocol, Alice and Bob run the indirect protocol using
the initial state of the ancilla given in Eq.~(\ref{ALPHAC}) and the c-phase
gate, as illustrated before.

\begin{figure*}[t]
\includegraphics[width=1.1\textwidth]{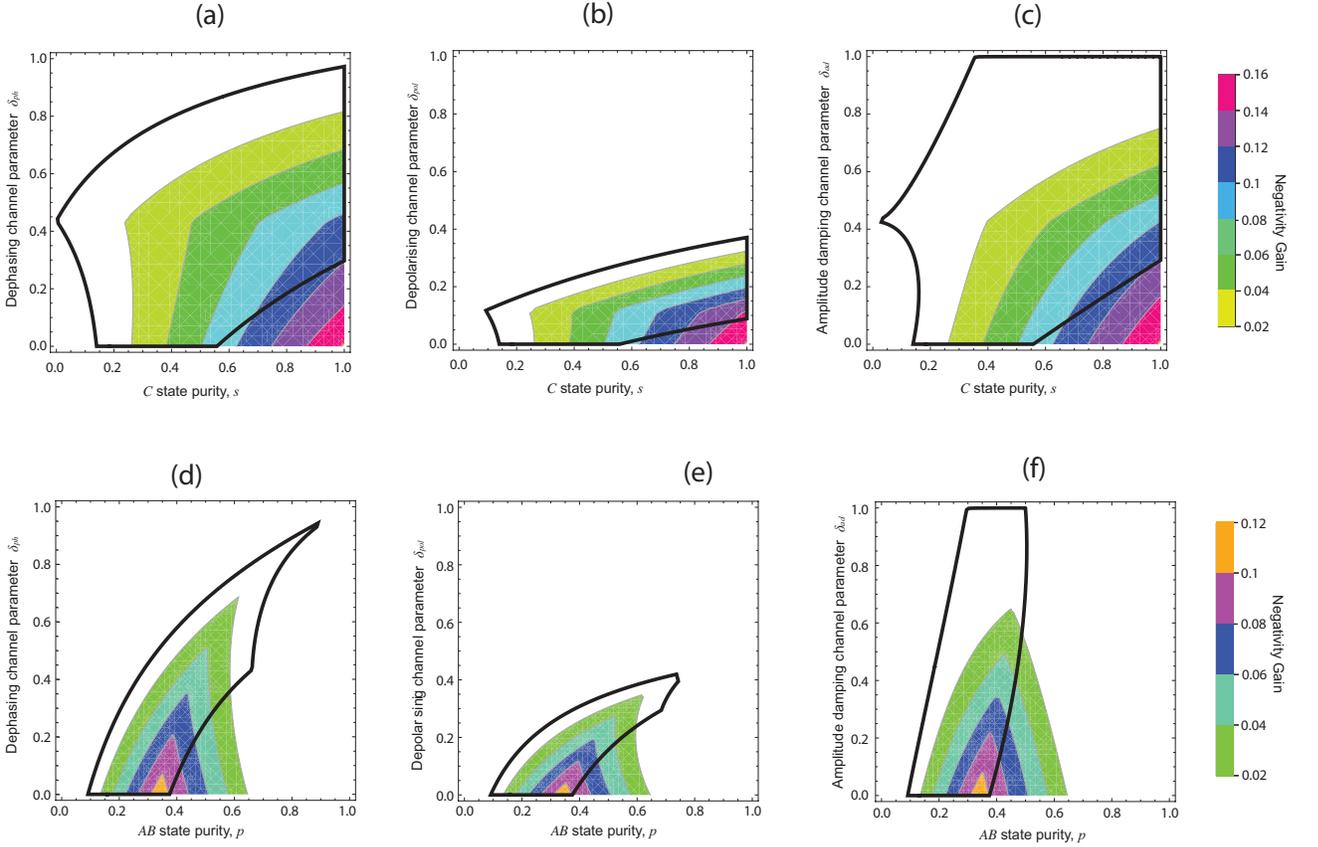}
\caption{(Color online) Entanglement gain achieved solely in the indirect protocol
of Fig. \ref{p2}, i.e., the final negativity minus the negativity
between the laboratories after the direct protocol. The region delimited by
the black line shows where the protocol is excessive. Results for
different noises are presented in columns. In the upper plots, (a)
to (c), the parameter of the initial Werner state is $p=0.34$ but
the same qualitative results are obtained for different values of $p$. Notice
that, similarly to Fig. \ref{r1}, the optimal gain is achieved with
$C$ in a pure state. In the lower plots, (d) to (f), the negativity
gain is presented as a function of the parameter of the Werner state
of $A$ and $B$, i.e., $p$ in Eq.~(\ref{werner}). Ancilla $C$
is initially in the state (\ref{ALPHAC}) with $s=2/3$. Again, qualitatively
the same results are seen for different $s$. In this case, the largest
gain is achieved by excessive protocols (whenever effective) but differently
from Fig. \ref{r1} the optimal value of $p$ now depends on the strength
of noise.}
\label{r2} 
\end{figure*}

In Fig.~\ref{r2} we present the negativity gain in the indirect protocol
alone, i.e. the entanglement achieved after the direct distribution step is
subtracted from the final entanglement between the laboratories. This
allows us to easily compare this situation with the entanglement gains presented
in Fig.~\ref{r1}. First we note that in both cases the range of
parameters giving excessive gain is very large. This can be quantified by the areas in the plots of Figs. \ref{r1} and \ref{r2} within the bold contour, and interpreted as high robustness of excessive protocols against noise. 

Other similarities to the protocols of the previous subsection relate to
the excessiveness of the optimal entanglement gain. As demonstrated
in Fig.~\ref{r2} (a)-(c) the highest gain is obtained starting with pure
ancilla states and the protocol is non-excessive, independently of
the type of noise. However, it should be stressed that extra communication
of particle $B$ over noisy channel leads to more mixed ancillary
states, while the indirect protocol is
still non-excessive. The second similarity is that the best protocols are always excessive [cf. Fig.~\ref{r2} (d)-(f)]. Summing up, whether the protocol is excessive or non-excessive depends on easily controllable parameters, such as the purity of the state of particle 
$C$ and parameter $p$ in the state of  the pair $AB$.

Not all the features of the present protocol are the same as in the one illustrated
in the previous subsection. The maximum gain of negativity now depends on both the 
purity of the Werner-state resource and the actual amount of noise in the channel.
For perfect channels, it is always best to begin with the Werner state
at the border of separability. For noisy channels, the entanglement admixture in the initial
 Werner state corresponding to the largest entanglement gain depends on the strength of the channel.
For stronger noise, i.e., larger $\delta_{\mathrm{ch}}$, one should
begin with larger entanglement in the Werner state as part of it would be
lost during the initial transmission of $B$. However, as intuition suggests,
such initial entanglement cannot be too large, as it is very difficult
to improve the degree of entanglement in states that are already highly entangled.

Finally, we note that starting with direct distribution of mixed Werner states, there is a range of noise parameters leading to no direct entanglement
gain although the channels are not entanglement breaking (see Sec.~\ref{SEC_EBC}).
This is clear for initially separable Werner states but also happens if they are weakly entangled.
If Alice can only prepare such Werner state, the indirect protocols
provide the only means of entanglement creation between the laboratories,
and for certain noise and state parameters, the excessive distribution
is the only viable option.

\begin{figure}
\includegraphics[width=0.5\textwidth]{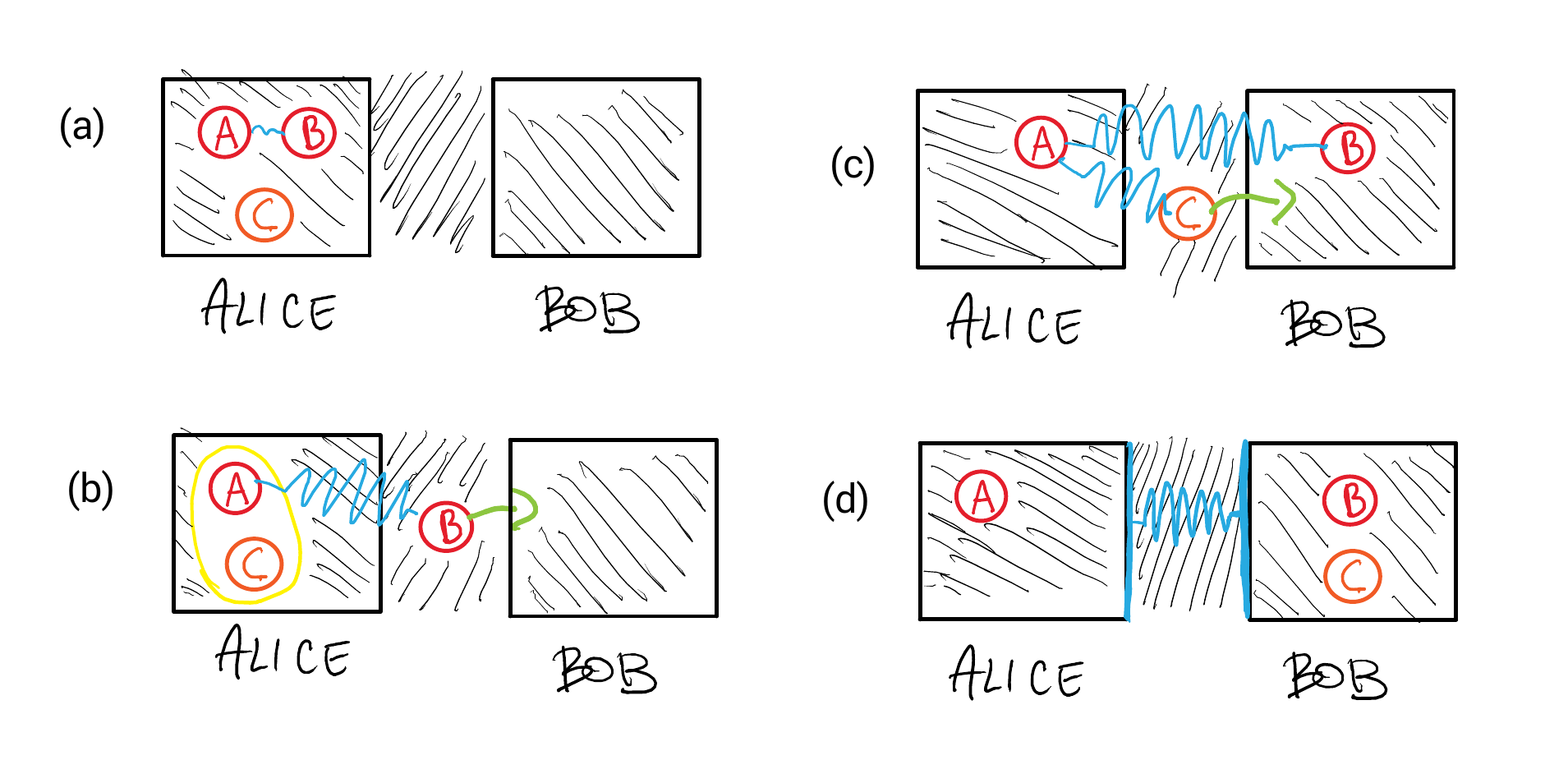}
\caption{(Color online) Entanglement distribution in the presence of local
noises as well as noisy channel. (a) All qubits are initially in Alice's
laboratory. (b) $B$ travels to reach Bob, while $A$ and $C$ stay
in the noisy laboratory. (c) Finally $C$ travels through the channel
while $A$ and $B$ are affected by their respective local noises.
(d) We compare the final entanglement between the laboratories with entanglement
between them after the direct protocol, i.e., panel (c).}
\label{p3} 
\end{figure}

\begin{figure*}
\includegraphics[width=1\textwidth]{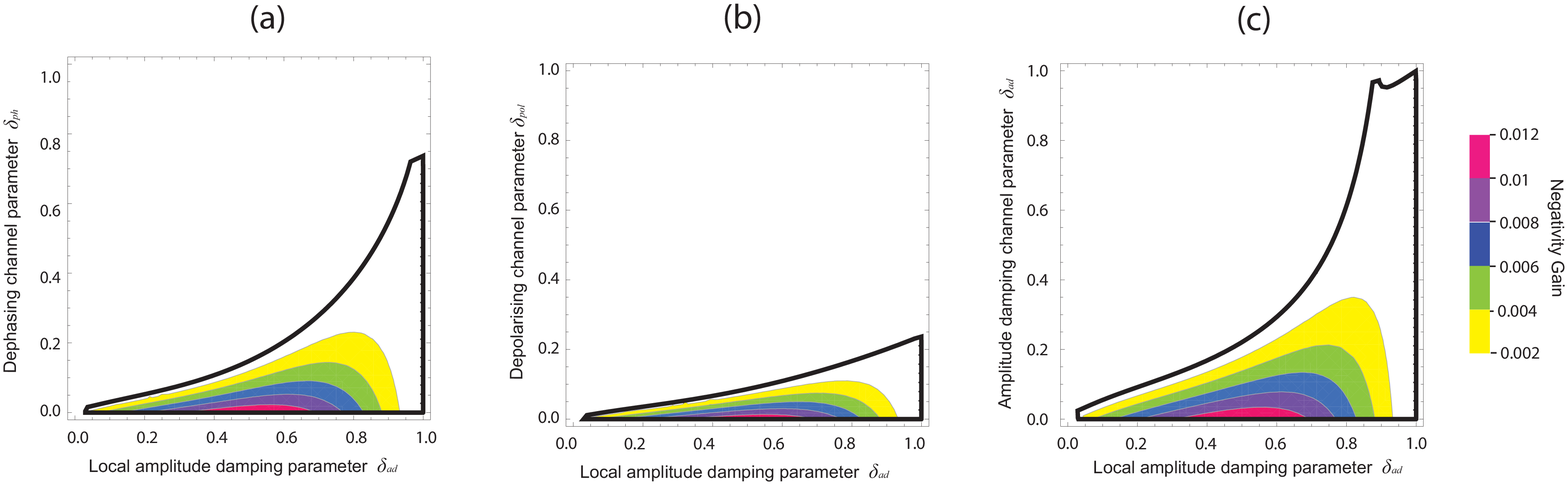}
\caption{(Color online) Entanglement gain achieved in the indirect protocol
described in Fig. \ref{p3}. The parameter for the initial Werner state is $p=0.34$ while $C$ is prepared in a pure state. The local noises of Alice and Bob are
chosen as amplitude damping noises and assumed here to have the same
strength, given in the horizontal axis. Along the vertical axes we
present the strength of different channel noises. Within the black
line the protocol is excessive, i.e., the communicated entanglement
is smaller than the gain. There are many pairs of noise parameters
that allow excessive protocols to take place. Notice that the points
of largest gain, for local amplitude damping parameter about $\delta_{\mathrm{ad}}\sim0.5$, are inside
the excessive region. The gain in negativity  is however much smaller than
in the absence of local noise, see Figs. \ref{r1} and \ref{r2}. }
\label{local} 
\end{figure*}

\subsection{Direct-then-indirect distribution with noisy laboratories and via
noisy channels}

Our last numerical example proceeds towards the consideration of more realistic experimental
situations. In addition to a noisy channel connecting them, we now also include local noise
affecting the laboratories of Alice and Bob, as Fig.~\ref{p3} illustrates.
The protocol goes as follows: Alice begins with three particles: $AB$
prepared in the Werner state of Eq.~(\ref{werner}), and $C$ prepared in state $(\ket{0}+\ket{1})/\sqrt{2}$. 
The Werner state could be regarded as an output of some non-perfect entangling procedure.
Next we assume that $A$ and $C$ interact via c-phase gate which
is instantaneous and ideal. Then particle $B$ travels to Bob via
the noisy channel whereas $A$ and $C$ are independently affected
by local noise. Finally, ancilla $C$ experiences the channel noise,
particle $A$ experienced the noise in Alice's laboratory and particle
$B$ experiences the noise in Bob's laboratory.

A representative case is when both local noises are amplitude damping,
e.g. thermal baths, while the noise in the channel can be different.
Fig. \ref{local} shows that in the present case the excessive protocols
are also very robust as characterised by the range of noise parameters
for which there is a gain in negativity. The protocols giving the
highest entanglement gain are always found to be excessive. Note also
that the gain is one order of magnitude smaller than the
cases without local noise. Altogether, although excessive protocols
seem unusual, they are actually quite relevant when entanglement is gained
via simple protocols operating in natural noisy conditions.

\subsection{No distribution via entanglement breaking channels}

\label{SEC_EBC}

The most intriguing feature of excessive protocols is the possibility
to vein entanglement even if no entanglement is communicated. The existence of 
such protocols is known since the work by Cubitt \emph{et al.}~\cite{cubitt}. A natural question arises in the context
of noisy channels: If in order to gain entanglement, no entanglement
has to cross the channel, could EB channels allow for excessive entanglement gain?

A positive answer to such question would provide a striking example
of usefulness of indirect and excessive protocols. Unfortunately, the
answer is negative, as can be seen from Eq.~(\ref{povm}). The same
effect as the action of EB channels could be obtained by the following local
operations and classical communication. Instead of sending $C$ via
the channel, Alice performs (in her laboratory) the POVM measurement
corresponding to the EB channel. She then sends the (classical) outcome to Bob's location, where an ancillary system $D$ is prepared in the corresponding states $\gamma_{n}$. Alice and Bob do not know what is the actual
outcome of the measurement. Therefore, they assign a density operator as in Eq.~(\ref{povm}) to particles $A$, $B$, and $D$.
As only classical information is transmitted, entanglement does not grow.

A more formal proof emphasising that entanglement gain is calculated
for different bipartitions is given by the following inequality
\begin{equation}
E_{A:BD}'-E_{AC:B}\le E_{AC:BD}'-E_{AC:BD}\le0.
\end{equation}
The inequality to the right means that local operations and classical
communication do not increase entanglement, where $E_{AC:BD}$ denotes
entanglement before the action of the channel and $E_{AC:BD}'$ is
after the channel. The left inequality is obtained from $E_{AC:B}=E_{AC:BD}$,
as initially Bob's ancilla is completely uncorrelated, and using $E_{A:BD}'\le E_{AC:BD}'$,
due to tracing out one subsystem.

The case of EB channels clearly illustrates that,
although communicated quantum discord is necessary for entanglement
gain~\cite{bounds1,bounds2}, it is not a tight bound to entanglement
gain in every distribution protocol~\cite{sep2}. The state resulting
from EB channels can possess some discord as measured on the ancilla, but this can be thought of as being locally created by a device in
Bob's laboratory, which in fact is fed with purely classical communication.

\section{Conclusions}

We have classified protocols for entanglement gain into direct
and indirect, depending on whether entanglement is generated via mutual
interaction or with the help of ancillas, and further into excessive
and non-excessive, depending on whether entanglement gain exceeds
the amount of communicated entanglement. Analytical examples were
provided illustrating the various protocols. Such analysis has been complemented by a 
numerical study showing the usefulness of excessive protocols in the presence of noise.

These results will be of use in quantum information science where
distributing entanglement is a prerequisite for many relevant tasks.
Achieving this in a cheap and reliable way is essential to the development
of future quantum technologies. They also reinforces the role of quantum
discord as a beneficial and practically relevant quantity~\cite{dakic,PhysRevLett.112.140507, gu2, gu3, bobby, datta, lanyon, girolami, streltsov2011linking, streltsov2013quantum, PhysRevA.83.032323, horodecki2012quantum, PhysRevLett.100.090502, PhysRevLett.106.220403,pirandola2014quantum}: as discord appears to bound the entanglement gain and to allow for excessive entanglement distribution, it emerges as a key player of fundamental relevance in the 
context of quantum communication and networking.

\acknowledgments
MZ is grateful to the Centre for Theoretical Atomic, Molecular, and Optical Physics, Queen's University Belfast, 
for hospitality at various stages in the development of this project. This work is supported by the National Research Foundation, Ministry
of Education of Singapore Grant No. RG98/13, and start-up grant of
the Nanyang Technological University.
SB thanks CQT, NUS for supporting a visit when part of the work reported was done. 
SB is supported in part by DST-SERB project SR/S2/LOP-18/2012.
AB is supported by DST-SERB project SR/S2/LOP-18/2012. 
SH is supported by a fellowship from CSIR, Govt. of India. 
PD is supported by an Inspire Fellowship from DST, Govt. of India.
TK wishes to acknowledge the funding support for this project from Nanyang Technological University under the Undergraduate Research Experience on CAmpus (URECA) programme. MP acknowledges the John Templeton Foundation (grant ID 43467), and the UK EPSRC (EP/M003019/1) for financial support. 
\bibliographystyle{apsrev4-1}
\bibliography{indirect.bib}

\end{document}